%%%%%%%%%%%%%%%%%%%%%%%%%%%%%%%%%%%%%%%%%%%%%%%%%%%%%%%%%%%%%%%%%%%%%%%%%%%%%%%%
%2345678901234567890123456789012345678901234567890123456789012345678901234567890
%        1         2         3         4         5         6         7         8

\documentclass[letterpaper, 10 pt, conference]{ieeeconf}  % Comment this line out if you need a4paper

\IEEEoverridecommandlockouts                              % This command is only needed if 
                                                          % you want to use the \thanks command

\overrideIEEEmargins                                      % Needed to meet printer requirements.

\pdfminorversion=4

\usepackage[pdftex, pdfstartview={FitV}, pdfpagelayout={TwoColumnLeft},bookmarksopen=true,plainpages = false, colorlinks=true, linkcolor=black, citecolor = black, urlcolor = black,filecolor=black , pagebackref=false,hypertexnames=false, plainpages=false, pdfpagelabels ]{hyperref}

%\usepackage[authormarkuptext=name,addedmarkup=bf,authormarkupposition=left]{changes}
%%\usepackage[final]{changes} %Use this to hide all comments.
%\definechangesauthor[name={J.~H.}, color={red}]{jh}
%\definechangesauthor[name={B.~L.}, color={blue}]{bl}
%\setremarkmarkup{\bf(#2)}
%\newcommand{\XX}[1]{\added[id=jh,remark={#1}]{}}
%\newcommand{\bxx}[1]{\added[id=bl,remark={#1}]{}}

%\usepackage[numbers,sort&compress]{natbib}  % Might want to use this one.....
%\usepackage{hypernat}

\usepackage{graphicx}
\usepackage{epstopdf}
\usepackage{amssymb,amsmath}

\usepackage{amsthm}

\usepackage{subfig}
\usepackage[font=footnotesize]{caption}
\usepackage{xcolor}
\usepackage{units}
\usepackage{algorithm}
\usepackage[noend]{algpseudocode}
\usepackage{balance}
\usepackage[sort,compress]{cite}
\usepackage{tabularx}

%\newcounter{asmpt}
%\newcounter{def}
%\newcounter{prob}
%\newcounter{prop}
%\newcounter{thm}
%\newcommand \assumption{\stepcounter{asmpt} \noindent \textbf{Assumption \theasmpt:  }}
%\newcommand \definition{\stepcounter{def}  \noindent \textbf{Definition \thedef: }}
%\newcommand \property{\stepcounter{prop}  \noindent \textbf{Property  \theprop: }}
%\newcommand \Problem[1]{\stepcounter{prob} \noindent \textbf{P\theprob: {#1} -- }}
%\newcommand \Theorem[1]{\stepcounter{thm} \noindent \textbf{Theorem \thethm:} {#1} }

\newtheoremstyle{named}{}{}{\itshape}{}{\bfseries}{.}{.5em}{\thmname{#1}\thmnumber{ #2}\thmnote{ (#3)}}
\theoremstyle{named}
\newtheorem{namedtheorem}{Theorem}
\theoremstyle{named}
\newtheorem{namedcorollary}{Corollary}

\newtheoremstyle{myproblem}{}{}{}{}{\bfseries}{--}{.5em}{\thmname{#1}\thmnumber{ #2} \thmnote{#3}}
\theoremstyle{myproblem}
\newtheorem{problem}{Problem}

\theoremstyle{plain}

\theoremstyle{definition}
\newtheorem{assumption}{Assumption}
\theoremstyle{definition}
\newtheorem{definition}{Definition}

\theoremstyle{remark}

\usepackage[hang,flushmargin]{footmisc}

\usepackage[capitalize]{cleveref}
\crefformat{equation}{(#2#1#3)}
\Crefformat{equation}{Equation~(#2#1#3)}
\Crefname{equation}{Equation}{Eqs.}

\usepackage{bm}

\usepackage{accents}
\newcommand{\ubar}[1]{\underaccent{\bar}{#1}}

\title{\LARGE \bf
Dynamic Tube MPC for Nonlinear Systems 
}

\author{Brett T. Lopez$^{1}$, Jean-Jacques E. Slotine$^{2}$, and Jonathan P. How$^{1}$% <-this % stops a space
\thanks{$^{1}$Aerospace Control Laboratory, Massachusetts Institute of Technology, Cambridge MA, {\tt\small \{btlopez,jhow\}@mit.edu}}
\thanks{$^{2}$Nonlinear Systems Laboratory, Massachusetts Institute of Technology, Cambridge MA, {\tt\small jjs@mit.edu}}
}

\begin{document}

\maketitle
\thispagestyle{empty}
\pagestyle{empty}

%%%%%%%%%%%%%%%%%%%%%%%%%%%%%%%%%%%%%%%%%%%%%%%%%%%%%%%%%%%%%%%%%%%%%%%%%%%%%%%%

\begin{abstract}

Modeling error or external disturbances can severely degrade the performance of Model Predictive Control (MPC) in real-world scenarios.
Robust MPC (RMPC) addresses this limitation by optimizing over feedback policies but at the expense of increased computational complexity.
Tube MPC is an approximate solution strategy in which a robust controller, designed offline, keeps the system in an invariant tube around a desired nominal trajectory, generated online.
Naturally, this decomposition is suboptimal, especially for systems with changing objectives or operating conditions.
In addition, many tube MPC approaches are unable to capture state-dependent uncertainty due to the complexity of calculating invariant tubes, resulting in overly-conservative approximations.
This work presents the Dynamic Tube MPC (DTMPC) framework for nonlinear systems where both the tube geometry and open-loop trajectory are optimized \textit{simultaneously}.
By using boundary layer sliding control, the tube geometry can be expressed as a simple relation between control parameters and uncertainty bound; enabling the tube geometry dynamics to be added to the nominal MPC optimization with minimal increase in computational complexity. 
In addition, DTMPC is able to leverage state-dependent uncertainty to reduce conservativeness and improve optimization feasibility.
DTMPC is demonstrated to robustly perform obstacle avoidance and modify the tube geometry in response to obstacle proximity.

\end{abstract}

%%%%%%%%%%%%%%%%%%%%%%%%%%%%%%%%%%%%%%%%%%%%%%%%%%%%%%%%%%%%%%%%%%%%%%%%%%%%%%%%

\section{INTRODUCTION}
\label{sec:intro}

Model predictive control (MPC) has become a core control strategy because of its natural ability to handle constraints and balance competing objectives.
Heavy reliance on a model though makes MPC susceptible to modeling error and external disturbances, often leading to poor performance or instability.
Robust MPC (RMPC) addresses this limitation (at the expense of additional computational complexity) by optimizing over control policies instead of open-loop control actions.
Tube MPC is a tractable alternative that decomposes RMPC into an offline robust controller design and online open-loop MPC problem.
However, this decoupled design strategy restricts the tube geometry (i.e., feedback controller) to be fixed for \textit{all} operating conditions, which can lead to suboptimal performance. 
This article presents a framework for nonlinear systems where the tube geometry \textit{and} open-loop reference trajectory are designed simultaneously online, giving the optimization an additional degree of freedom to satisfy constraints or changing objectives.

Tube MPC for nonlinear systems has been an active area of research. 
For example, hierarchical MPC \cite{mayne2007tube}, reachability theory \cite{althoff2008reachability}, sliding mode control \cite{rubagotti2011robust,incremona2015hierarchical}, sum-of-square optimization \cite{majumdar2017funnel}, and Control Contraction Metrics \cite{singh2017robust} have all been recently used in nonlinear tube MPC.
These approaches try to maximize robustness by minimizing tube size given control constraints and bounds on uncertainty.
However, minimizing tube size typically results in a high-bandwidth controller that responds aggressively to measurement noise or external disturbances.
For mobile systems that use onboard sensing for estimation or perception, this type of response can severely degrade performance or cause a catastrophic failure. 
Further, the performance reduction often depends on the current operating environment so modifying the tube geometry online would be advantageous. 
While there is a precedent for optimizing tube geometry in linear MPC \cite{rakovic2012homothetic,rakovic2016elastic}, the relationship between tube geometry and control parameters for nonlinear systems is often too complex to put in a form suitable for real-time optimization. The approach described herein circumvents this issue by providing a simple and exact description of how the tube geometry, control parameters, and uncertainty are related, enabling the tube geometry to be optimized in real-time.

The primary contribution of this work is a tube MPC framework for nonlinear systems that simultaneously optimizes tube geometry and open-loop reference trajectories in the presence of uncertainty.
The proposed framework leverages the simplicity and strong robustness properties of time-varying boundary layer sliding control \cite{slotine1984sliding} to establish a connection between tube geometry, control parameters, and uncertainty.
Specifically, the tube geometry can be described by a simple first-order differential equation that is a function of control bandwidth and uncertainty bound.
This allows the development of a framework with several desirable properties.
First, the tube geometry can be easily optimized, with minimal increase in computational complexity, by treating the control bandwidth as a decision variable and augmenting the state vector with the tube geometry dynamics.
Second, the uncertainty bound in the tube dynamics can be made  state-dependent, allowing the optimizer to make smarter decisions about which states to avoid given the system's current state and proximity to constraints.
And third, less conservative tubes can be constructed by combining the tube and tracking error dynamics.
Simulation results demonstrate DTMPC's ability to optimize the tube geometry, via modulating control bandwidth and/or utilize knowledge of state-dependent uncertainty, to robustly avoid obstacles.   

%The proposed framework is shown in simulation to enhance the performance and robustness of a nonlinear systems performing obstacle avoidance with state-dependent uncertainty and external disturbances.

%%%%%%%%%%%%%%%%%%%%%%%%%%%%%%%%%%%%%%%%%%%%%%%%%%%%%%%%%%%%%%%%%%%%%%%%%%%%%%%%

\section{RELATED WORKS}
\label{sec:rw}

A number of works have been published on the stability, feasibility, and performance of linear tube MPC \cite{mayne2014model,rawlings2009model,mayne2016robust}.
While this is an effective strategy to achieve robustness, decoupling the nominal MPC problem and controller design is suboptimal. 
Rakovi\`c et al. \cite{rakovic2012homothetic} showed that the region of attraction can be enlarged by parameterizing the problem with the open-loop trajectory and tube size.
The authors presented the homothetic tube MPC (HTMPC) algorithm that treated the state and control tubes as homothetic copies of a fixed cross-section shape, enabling the problem to be parameterized by the tube’s centers (i.e., open-loop trajectory) and a cross-section scaling factor. 
The work was extended to tubes with varying shapes, known as elastic tube MPC (ETMPC), but at the expense of computational complexity \cite{rakovic2016elastic}. 
Both HTMPC and ETMPC possess strong theoretical properties and have the potential to significantly improve performance but a nonlinear extension has yet to be developed.

Recent theoretical and computational advances in nonlinear control design and invariant set computation has aided in the development of new nonlinear tube MPC techniques.
Mayne et al. proposed a two-tier MPC architecture where the nominal MPC problem, with tightened constraints, is solved followed by an ancillary problem that drives the current state to the nominal trajectory \cite{mayne2007tube}. 
Linear reachability theory is another strategy but tends to be overly conservative because nonlinearities are treated as disturbances \cite{althoff2008reachability}.
Because of its strong robustness properties, a number of works have proposed using sliding mode control as an ancillary controller \cite{muske2007predictive,rubagotti2011robust,mitic2013approach,chakrabarty2014robust,incremona2015hierarchical}. 
The work by Muske et al.\ is of particular interest because the parameters of the sliding surface were optimized within the MPC optimization to achieve minimum time state convergence.
Majumdar et al.\ constructed ancillary controllers for nonlinear systems via sum-of-squares (SOS) optimization that minimized funnel size (akin to a tube) \cite{majumdar2017funnel}. 
The method, however, required a pre-specified trajectory library and an extremely time consuming offline computation phase. 
Singh et al.\ proposed using Control Contraction Metrics to construct tubes and showed their approach increases the region of feasibility for the optimization \cite{singh2017robust}.
All of the aforementioned works fall into the category of rigid tube MPC (i.e., fixed tube size) so are inherently suboptimal.
Further, these approaches tend to produce overly conservative tubes because they cannot leverage knowledge of state-dependent uncertainty.

This work uses boundary layer sliding control to address the suboptimality and conservatism of the aforementioned techniques for nonlinear systems. This is accomplished by: 1) incorporating the tube geometry into the optimization, subsequently bridging the gap between linear and nonlinear homothetic/elastic tube MPC; 2) leveraging knowledge of state-dependent uncertainty; and 3) combining the tube and error dynamics to reduce the spread of possible trajectories.

%%%%%%%%%%%%%%%%%%%%%%%%%%%%%%%%%%%%%%%%%%%%%%%%%%%%%%%%%%%%%%%%%%%%%%%%%%%%%%%%

\section{PROBLEM FORMULATION}
\label{sec:pf}
Consider a nonlinear, time-invariant, and control affine system given by (omitting the time argument)
\begin{equation}
\dot{x} = f\left(x\right) + b\left(x\right)u + d,
\label{eq:dyn}
\end{equation}
where $x \in \mathbb{R}^n$ is the state of the system, $u \in \mathbb{R}^m$ is the control input, and $d \in \mathbb{R}^n$ is an external disturbance. 

\begin{assumption}\label{assumption:dyn}
The dynamics $f$ can be expressed as \\$f=\hat{f}+\tilde{f}$ where $\hat{f}$ is the nominal dynamics and $\tilde{f}$ is the bounded model error (i.e., $|\tilde{f} (x)| \leq \Delta(x)$). 
\end{assumption}

%Similarly, $b(\cdot)$ is of known sign and bounded (i.e. $\left( I+\Delta \right)\hat{b}(x) = b(x)$ with $|\Delta| \leq W$).

Note that the model error bound in assumption \ref{assumption:dyn} is state-dependent, which can be leveraged to construct less conservative tubes.

\begin{assumption}
The disturbance $d$ belongs to  a closed, bounded, and connected set $\mathbb{D}$ (i.e., $\mathbb{D}:= \{ d \in \mathbb{R}^n : |d| \leq D \}$) and is in the span of the control input matrix (i.e., $d \in \text{span}\left(b(x)\right)$).
\end{assumption}

The standard RMPC formulation involves a minimax optimization to construct a feedback policy $\pi : \mathbb{X} ~\times~ \mathbb{R} \rightarrow \mathbb{U}$ where $x\in\mathbb{X}$ and $u\in\mathbb{U}$ are the allowable states and control inputs, respectively.
However, optimizing over arbitrary functions is not tractable and discretization suffers from the curse of dimensionality.
The standard approach taken in tube MPC \cite{mayne2014model} is to change the decision variable from control policy $\pi$ to open-loop control input $u^*$.
In order to achieve this re-parameterization, the following assumption is made about the structure of the control policy $\pi$.
\begin{assumption}\label{assumption:pi}
The control policy $\pi$ takes the form \\
$\pi = u^* + \kappa(x,x^*)$ where $u^*$ and $x^*$ are the open-loop input and reference trajectory, respectively.
\end{assumption}

In the tube MPC literature $\kappa$ is known as the \textit{ancillary controller} and is typically designed offline.
The role of the ancillary controller is to ensure the state $x$ remains in a \textit{robust control invariant} (RCI) tube around the nominal trajectory $x^*$. 

\begin{definition}\label{def:rci}
Let $\mathbb{X}$ denote the set of allowable states and let $\tilde{x} := x-x^*$. The set $\Omega \subset \mathbb{X}$ is a RCI tube if there exists an ancillary controller $\kappa\left(x,x^*\right)$ such that if $\tilde{x}\left(t_0\right) \in \Omega$, then, for all realizations of the disturbance and modeling error, $\tilde{x}(t) \in \Omega$, $\forall t \geq t_0$.
\end{definition}

\begin{figure}[t]
\centering
\includegraphics[width=0.4\textwidth]{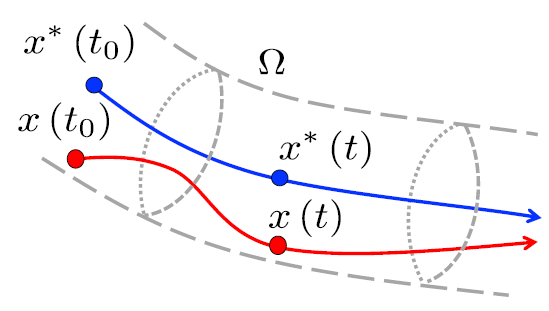}
\caption{Illustration of robust control invariant (RCI) tube $\Omega$ centered around desired state $x^*$. If the state $x$ begins in $\Omega$ then it remains in $\Omega$ indefinitely for all realizations of the model error or external disturbance.}
\label{fig:rci}
\vskip -0.2in
\end{figure}

\cref{fig:rci} provides a visualization of a RCI tube. Given an ancillary control $\kappa$ and associated RCI tube $\Omega$, a constraint-tightened version of the nominal MPC problem can be solved to generate an open-loop control input $u^*$ and trajectory $x^*$.

Calculating a RCI tube for a given ancillary controller can be difficult for nonlinear systems.
Unsurprisingly, the chosen methodology for synthesizing the ancillary controller can dramatically influence the complexity of calculating the tube geometry.
Ideally, the controller and tube geometry could be parameterized such that an explicit relationship between the two can be derived; enabling the controller and tube geometry to be designed online within the optimization.
Also, the control strategy should be able capture state-dependent uncertainty and how it impacts the tube geometry to reduce conservativeness.
While it may seem infeasible to find such a control synthesis strategy,  \cref{sec:blsc} will show that boundary layer sliding control possesses both properties.

\section{BOUNDARY LAYER SLIDING CONTROL}
\label{sec:blsc}

\subsection{Overview}
This section reviews time-varying boundary layer sliding control \cite{slotine1984sliding,slotine1991applied}, provides analysis supporting its use as an ancillary controller, and shows how the DTMPC framework leverages its properties. 
As reviewed in \cref{sec:rw}, sliding mode control has been extensively used for nonlinear tube MPC because of its simplicity and strong robustness properties.
Unlike other control strategies, sliding mode control completely cancels any bounded modeling error or external disturbance (reducing the RCI tube to zero). 
However, complete cancellation comes at the cost of high-frequency discontinuous control making it impractical for many real systems; a number of version that ensure continuity in the control signal have since been developed.
Note that the boundary layer controller was originally developed in \cite{slotine1984sliding} and is only presented here for completeness.
Before proceeding the following assumption is made.

\begin{assumption}
The system given by \cref{eq:dyn} has the same number of outputs to be controlled as inputs. More precisely, the dynamic can be expressed as
\begin{equation}
x_i^{(n_i)} = f_i(x)  + \sum_{j=1}^m b_{ij}(x) u_j + d_i, \quad i=1,...,m.
\end{equation}
\end{assumption}

Note that assumption~4 requires system \cref{eq:dyn} to be either feedback linearizable or minimum phase.
While many systems fall into one of these categories, future work will extend DTMPC to more general nonlinear systems.

%Note that assumption~4 requires system \cref{eq:dyn} to be fully actuated, which is true of many, but not all systems -- extensions to 
%underactuated and overactuated systems are planned for future work.
%, but many systems fall into this category.

\subsection{Sliding Control}

Let $\tilde{x}_i:=x_i-x^*_i$ be the tracking error for output $x_i$. Then, for $\lambda_i > 0$, the sliding variable $s_i$ for output $x_i$ is defined as 
\begin{align}
s_i &= \left(\frac{d}{dt} + \lambda_i\right)^{n_i-1} \tilde{x}_i \label{eq:s} \\
 &= \tilde{x}^{(n_i-1)}_i + \cdots + \lambda_i^{n_i-1} \tilde{x}_i \nonumber \\
 & = x_i^{(n_i-1)} - x_{ri}^{(n_i-1)}, \nonumber
\end{align}
\noindent where
\begin{equation}
x_{ri}^{(n_i-1)} = x_i^{*(n_i-1)} - \sum_{k=1}^{n_i-1}{{n_i-1}\choose{k-1}}\lambda_i^{n_i-k}\tilde{x}_i^{(k-1)}.
\end{equation}

In sliding mode control, a sliding manifold $\mathcal{S}_i$ is defined such that $s_i=0$ for all time once the manifold is reached.  
This condition guarantees the tracking error goes to zero exponentially via \cref{eq:s}.
It can be shown that a discontinuous controller is required to ensure the manifold $\mathcal{S}_i$ is reached in finite time and is invariant to uncertainty \cite{slotine1991applied}. 
However, high-frequency discontinuous control can, among other things, excite unmodeled high-frequency dynamics and shorten actuator life span.

%\begin{figure}[!t]
%\centering
%\subfloat[Sliding surface $\mathcal{S}$.]{\includegraphics[scale=0.45]{}
%\label{fig:sm}}
%\hspace{1em}
%\subfloat[Boundary layer $\mathcal{B}$.]{\includegraphics[width=0.15\textwidth]{../figures/sbl}\label{fig:sbl}}
%\caption{Sliding manifolds. (a): Sliding surface for traditional sliding mode control. $\mathcal{S}$ is invariant to uncertainty and disturbances but requires discontinuous control. (b): A time-varying region $\mathcal{B}$ around $\mathcal{S}$ can be constructed such that $\mathcal{B}$ is invariant through continuous control.}
%\vskip -0.2in
%\end{figure}

One strategy to smooth the control input is to introduce a boundary layer around the switching surface.
Specifically, let the boundary layer be defined as $\mathcal{B}_i := \{x : |s_i| \leq \Phi_i\}$ where $\Phi_i$ is the boundary layer thickness.
If $\Phi_i$ is time varying, then the boundary layer can be made attractive if the following differential equation is satisfied
\begin{equation}
\frac{1}{2} \frac{d}{dt} s_i^2 \leq \left( \dot{\Phi}_i - \eta_i \right) |s_i|,
\label{eq:dist_s}
\end{equation}
where $\eta_i$ dictates the convergence rate to the sliding surface.
Differentiating \cref{eq:s},
\begin{align}
\dot{s}_i &= x_i^{(n_i)} - x_{ri}^{(n_i)} \nonumber \\
&= f_i(x)  + \sum_{j=1}^m b_{ij}(x) u_j + d_i - x_{ri}^{(n_i)}. \label{eq:sdot}
\end{align}
Stacking \cref{eq:sdot} for each output, the vector version is obtained
\begin{equation}
\dot{s} = F(x) + B(x)u + d + x_r^{(n)}. \label{eq:sdot_v}
\end{equation}
Note that $F$ and $B$ are stacked versions of the dynamics and input matrix, respectively, that correspond to the \textit{output variables}. 
If the output variables are chosen to be the full state vector (i.e., state feedback linearization), then 
$F$ and $B$ simply become the dynamics and input matrix in \cref{eq:dyn}.
  
Let the controller take the form
\begin{equation}
u = B(x)^{-1} \left[ -\hat{F}(x) -  x_r^{(n)} - K(x) \text{sat}\left( \nicefrac{s}{\Phi} \right) \right], \label{eq:u}
\end{equation}
where sat$(\cdot)$ is the saturation function and the division is element-wise.
Then, for $|s|>\Phi$, the boundary layer is attractive if
\begin{equation}
K(x) = \Delta(x)  + D + \eta - \dot{\Phi} .\label{eq:K}
\end{equation}

Addition information can be inferred by considering the sliding variable dynamics inside the boundary. 
Again substituting \cref{eq:u} into \cref{eq:sdot_v} with $|s| \leq \Phi$,
\begin{equation}
\dot{s} = -\frac{K(x)}{\Phi} s + F(x) - \hat{F}(x) + d, \label{eq:sdot_bl}
\end{equation}
where again the division is element-wise. Alternatively, \cref{eq:sdot_bl} can be written as
\begin{equation}
\dot{s} = -\frac{K(x^*)}{\Phi} s + \left(F(x^*) - \hat{F}(x^*) + d + O\left(\tilde{x}\right) \right), 
\label{eq:sdot_bl_d}
\end{equation}
which is a first order filter with cutoff frequency $\frac{K(x^*)}{\Phi}$.
Let $\alpha$ be the desired cutoff frequency, then, leveraging \cref{eq:K}, one obtains 
\begin{equation}
\frac{\Delta(x^*) + D + \eta - \dot{\Phi}}{\Phi} = \alpha ,
\end{equation}
or
\begin{equation}
\dot{\Phi} =  - \alpha \Phi + \Delta(x^*) + D + \eta. \label{eq:phidot}
\end{equation}
Thus, the final control law is given by \cref{eq:u}, \cref{eq:K}, and \cref{eq:phidot}.

\subsection{Discussion}
The boundary layer sliding controller in \cref{eq:u} allows us to establish several key properties at the core of DTMPC.

%\begin{namedtheorem}[Conservative RCI Tube]
%The boundary layer $\mathcal{B}_i$ is a robust control invariant tube where the tube geometry is a polytope with $n_i$ edges and lengths
%\begin{equation}
%{n_i \choose k} \lambda_i^{n_i-k-1} \left|\tilde{x}_i^{(k)}\right| \leq \Phi_i, \quad k=0,...,n_i.
%\label{eq:bound1}
%\end{equation}
%\label{thm:rci1}
%\end{namedtheorem}
%\begin{proof}
%By Definition \ref{def:rci}, a RCI tube is established when the tracking error is bounded indefinitely.
%By construction, the boundary layer $\mathcal{B}_i$ is invariant to model error and disturbances since \cref{eq:dist_s} is satisfied by controller \cref{eq:u}. 
%Subsequently, $|s_i| \leq \Phi_i$ for all time after the boundary layer is reached.
%Using \cref{eq:s}, one sees that $\left|\tilde{x}^{n_i-1}_i + \cdots + \lambda_i^{n_i-1} \tilde{x}_i \right| \leq \Phi_i$. 
%Hence, the error is always bounded so, by definition \ref{def:rci}, $\mathcal{B}_i$ is a RCI tube. 
%Furthermore, the tracking error bound can be found using the binomial theorem.
%\end{proof}

\begin{namedtheorem}[RCI Tube]
Let $\tilde{z}_i = \left[ \tilde{x}_i ~\dot{\tilde{x}}_i ~\cdots \right]^T$ be the error vector for output $\tilde{x}_i$.
Boundary layer control induces a robust control invariant tube $\Omega_i$ where the tube geometry is given by
\begin{equation}
\begin{aligned}
\Omega_i(t) \leq e^{A_{c,i}(t-t_0)} &  \Omega_i(t_0) ~ + \\ 
 &\int_{t_0}^t e^{A_{c,i}(t-t_0-\tau)}B_{c,i}\Phi_i(\tau) d\tau,
\end{aligned}
\label{eq:bound1}
\end{equation}
where $A_{c,i}$ and $B_{c,i}$ are found by putting \cref{eq:s} into the controllable canonical form. 
\label{thm:rci1}
\end{namedtheorem}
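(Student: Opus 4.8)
The plan is to separate the argument into a \emph{reaching/invariance} part for the sliding variable and a \emph{filtering} part for the tracking error, then glue them with the variation-of-constants formula. First I would show that under the control law \cref{eq:u} with gain \cref{eq:K} the time-varying boundary layer $\mathcal{B}_i=\{x:|s_i|\le\Phi_i\}$ is positively invariant, so that $|s_i(t_0)|\le\Phi_i(t_0)$ implies $|s_i(t)|\le\Phi_i(t)$ for all $t\ge t_0$. Substituting \cref{eq:u}--\cref{eq:K} into \cref{eq:sdot_v} and using $|\tilde f_i|\le\Delta$, $|d_i|\le D$ reproduces the inequality \cref{eq:dist_s}; dividing by $|s_i|$ (for $s_i\neq0$) gives $\frac{d}{dt}\bigl(|s_i|-\Phi_i\bigr)\le-\eta_i<0$ whenever $|s_i|\ge\Phi_i$, so the signed distance to $\mathcal{B}_i$ strictly decreases outside $\mathcal{B}_i$ and the boundary cannot be crossed outward. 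Hence $|s_i(\tau)|\le\Phi_i(\tau)$ for all $\tau\ge t_0$.

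Next I would recognize \cref{eq:s} as an exponentially stable linear filter from input $s_i$ to the error vector $\tilde z_i=[\,\tilde x_i\ \ \dot{\tilde x}_i\ \ \cdots\ \ \tilde x_i^{(n_i-2)}\,]^T$. Solving the relation $s_i=\bigl(\frac{d}{dt}+\lambda_i\bigr)^{n_i-1}\tilde x_i$ for the top derivative and writing it in controllable canonical form yields $\dot{\tilde z}_i=A_{c,i}\tilde z_i+B_{c,i}s_i$ with $B_{c,i}=[\,0\ \cdots\ 0\ \ 1\,]^T$ and $A_{c,i}$ the companion matrix of $(\lambda+\lambda_i)^{n_i-1}$; since every eigenvalue of $A_{c,i}$ equals $-\lambda_i<0$, the filter is stable. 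By variation of constants, for \emph{any} admissible realization of the model error and disturbance,
\begin{equation*}
\tilde z_i(t)=e^{A_{c,i}(t-t_0)}\tilde z_i(t_0)+\int_{t_0}^{t}e^{A_{c,i}(t-\tau)}B_{c,i}\,s_i(\tau)\,d\tau .
\end{equation*}

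Then I would propagate the bound $|s_i(\tau)|\le\Phi_i(\tau)$ through this convolution. Taking $\Omega_i(t)$ to be the reachable set of $\tilde z_i(t)$ over all realizations and all admissible $s_i$ with $|s_i|\le\Phi_i$, linearity gives the set identity
\begin{equation*}
\Omega_i(t)=e^{A_{c,i}(t-t_0)}\Omega_i(t_0)\ \oplus\ \int_{t_0}^{t}e^{A_{c,i}(t-\tau)}B_{c,i}\,[-\Phi_i(\tau),\Phi_i(\tau)]\,d\tau ,
\end{equation*}
whose ``size'' is exactly the right-hand side of \cref{eq:bound1}. Since $\tilde z_i(t_0)\in\Omega_i(t_0)$ forces $\tilde z_i(t)\in\Omega_i(t)$ for every realization of uncertainty, $\Omega_i$ is the RCI tube of \cref{def:rci}, which completes the proof.

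The main obstacle is the last step: unlike a Metzler system, the companion matrix $A_{c,i}$ does not yield a nonnegative $e^{A_{c,i}t}$, so one cannot simply take absolute values componentwise to get \cref{eq:bound1} from the exact convolution. I would resolve this by reading \cref{eq:bound1} as a bound on $\Omega_i$ \emph{as a set} (the Minkowski-sum reachable set above), which makes the step an exact linear-systems fact requiring no sign structure; if a componentwise half-width is desired instead, one works in a monotone (e.g.\ $\infty$-)norm and absorbs $\sup_\tau\|e^{A_{c,i}(t-\tau)}B_{c,i}\|$ into the estimate. A secondary subtlety is the reaching transient: if $|s_i(t_0)|>\Phi_i(t_0)$ the bound on $s_i$ holds only after a finite reaching time, so I would either impose $|s_i(t_0)|\le\Phi_i(t_0)$ (equivalently $\tilde z_i(t_0)\in\Omega_i(t_0)$ for an appropriately initialized $\Phi_i$ from \cref{eq:phidot}) as a standing hypothesis, or enlarge $\Omega_i(t_0)$ to cover that transient.
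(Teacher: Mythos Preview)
Your argument follows the same skeleton as the paper's proof: rewrite \cref{eq:s} as the linear ODE $\tilde{x}_i^{(n_i-1)}+\cdots+\lambda_i^{n_i-1}\tilde{x}_i=s_i$, put it in controllable canonical form, apply variation of constants, and bound $s_i$ by $\Phi_i$. The paper is terser---it simply ``notes $|s_i|\le\Phi_i$'' without your invariance argument, and it obtains \cref{eq:bound1} by taking the element-wise absolute value and setting $\Omega_i(t)=|\tilde z_i(t)|$, i.e., precisely the componentwise step you flag as delicate; your Minkowski-sum reading and your remark on the reaching transient are therefore refinements of the paper's argument rather than a different route.
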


\begin{proof}
Recalling the definition of $s_i$ from \cref{eq:s}, the error dynamics are given by the linear differential equation
\begin{equation}
\tilde{x}_i^{(n_i-1)} + \cdots + \lambda_i^{n_i-1}\tilde{x}_i = s_i.
\label{eq:deqs}
\end{equation}
With the error vector $\tilde{z}_i = \left[ \tilde{x}_i ~\dot{\tilde{x}}_i ~\cdots \right]^T$ and  putting \cref{eq:deqs} into the controllable canonical form, the solution to \cref{eq:deqs} is
\begin{equation}
\tilde{z}_i (t) = e^{A_{c,i}(t-t_0)} \tilde{z}_i(t_0)+ \int_{t_0}^t e^{A_{c,i}(t-t_0-\tau)}B_{c,i} s_i d\tau.
\end{equation} 
Taking the element-wise absolute value $|\cdot|$, setting $\Omega_i(t) = |\tilde{z}_i(t)|$, and noting $|s_i| \leq \Phi_i$, \cref{eq:bound1} is obtained.
Thus, by \cref{def:rci}, $\Omega_i$ is a RCI tube since the error vector $\tilde{z}_i$ is bounded. 
\end{proof}

Theorem \ref{thm:rci1} proves that the geometry of the RCI tube $\Omega_i$ is uniquely described by the boundary layer thickness $\Phi_i$.
Using the terminology introduced by Rakovi\`c et al., the tubes in our approach are both homothetic and elastic.
For this reason, and the ability to capture state-dependent uncertainty, the approach developed here is called Dynamic Tube MPC.
Further, as briefly discussed in \cite{singh2017robust}, a tighter geometry can be obtained if the \textit{current} (as opposed to the predicted) tracking error is used in \cref{eq:bound1}.

The importance of \cref{eq:phidot} and \cref{eq:bound1} cannot be understated.
It gives a precise description for how the tube geometry changes with the level of uncertainty (from the model or otherwise). 
This is an incredibly useful relation for constructing tubes that are not overly conservative since, in most cases, the model error bound is typically picked to be a large constant because of the difficulty/inability to establish a relation like \cref{eq:phidot}. 
By letting the uncertainty be state-dependent, the controller and the MPC optimizer (to be discussed in \cref{sec:tgo}) can leverage all the available information to maximize performance.
This further underlines the importance of acquiring a high-fidelity model to reduce uncertainty and make the tube as small as possible without using high-bandwidth control.

Another interesting aspect of \cref{eq:phidot} is the choice of the cutoff frequency $\alpha$. 
In general, $\alpha$ and $\lambda$ are picked based on control-bandwidth requirements, such as  actuator limits or preventing excitation of higher-order dynamics.
It is clear from \cref{eq:phidot} that a larger $\alpha$ produces a smaller boundary layer thickness (i.e., high-bandwidth control leads to compact tubes).
However, from \cref{eq:sdot_bl_d}, increasing the bandwidth also increases the influence of the uncertainty.
Hence, the bandwidth should change depending on the current objective and proximity to state/control constraints (see \cref{sec:tgo}).

%%%%%%%%%%%%%%%%%%%%%%%%%%%%%%%%%%%%%%%%%%%%%%%%%%%%%%%%%%%%%%%%%%%%%%%%%%%%%%%%

\section{DYNAMIC TUBE MPC}
\label{sec:tgo}

\subsection{Overview}
This section presents the DTMPC algorithm and discusses its properties.
DTMPC is a unique algorithm because of its ability to change the tube geometry to meet changing objectives and to leverage state-dependent uncertainty to maximize performance.
This section first presents a constraint tightening procedure necessary to prevent constraint violation due to uncertainty.
Next, optimizing the tube geometry by adding the control bandwidth as a decision variable is discussed.
Lastly, the non-convex formulation of DTMPC is presented.
Before proceeding, the following assumption is made about the form of the state and actuator constraints.
\begin{assumption}
The state and actuator constraints take the form 
\begin{equation}
\| P_x x + q_x \| \leq c_x, \quad \| P_u u + q_u \| \leq c_u,
\label{eq:constaints}
\end{equation} 
where $\|\cdot\|$ is the 2-norm.
\end{assumption}
\noindent Many physical systems posses these type of constrains so the above assumption is not overly restrictive. 

\subsection{Constraint Tightening}
State and actuator constraints must be modified to account for the nonzero tracking error and control input caused by model error and disturbances. 
The following corollary establishes the modified state constraint. 
\begin{namedcorollary}[Tightened State Constraint]
Assume the control law \cref{eq:u} is used as an ancillary controller with associated RCI tube $\mathcal{B}$ and bounded tracking error $|\tilde{x}|$. 
Then, the following modified state constraint
\begin{equation}
\| P_x x^* + q_x \| \leq c_x - \|P_x\tilde{x}\|,
\label{eq:xtight}
\end{equation}
guarantees, for all realization of the uncertainty, the true constraint is satisfied.
\label{cor:xbound}
\end{namedcorollary}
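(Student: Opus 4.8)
The plan is to reduce the claim to the triangle inequality for the $2$-norm. Using the decomposition $x = x^* + \tilde{x}$, the argument of the true state constraint splits as $P_x x + q_x = (P_x x^* + q_x) + P_x\tilde{x}$, so subadditivity of $\|\cdot\|$ gives $\|P_x x + q_x\| \le \|P_x x^* + q_x\| + \|P_x\tilde{x}\|$. If the nominal trajectory obeys the tightened bound \cref{eq:xtight}, substituting the right-hand side yields $\|P_x x + q_x\| \le c_x - \|P_x\tilde{x}\| + \|P_x\tilde{x}\| = c_x$, i.e.\ the original constraint \cref{eq:constaints} holds for every realization of the disturbance and model error. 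So first I would write the splitting step, then invoke the tightened bound, and conclude.

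The only point needing care is the interpretation of the term $\|P_x\tilde{x}\|$ appearing in \cref{eq:xtight}: since the realized tracking error depends on the unknown uncertainty realization, it must be read as a worst case over the invariant tube, e.g.\ $\sup_{\tilde{x}\in\Omega}\|P_x\tilde{x}\|$, or a computable overbound obtained by propagating the componentwise envelope $\Omega_i$ of \cref{eq:bound1} and bounding $|P_x\tilde{x}|$ by $|P_x|\,|\tilde{x}|$ componentwise before taking the norm. Here \cref{thm:rci1} does the real work: because the controller \cref{eq:u} renders $\Omega$ (equivalently, the boundary layer $\mathcal{B}$ through $\Phi$) robustly control invariant, $\tilde{x}(t)$ remains inside this set for all $t\ge t_0$ and all admissible uncertainty, so the overbound is finite, uniform in time, and valid along the closed-loop trajectory. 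Thus the corollary is essentially a bookkeeping statement connecting \cref{def:rci} and \cref{thm:rci1} to the constraint tightening needed by the optimizer in \cref{sec:tgo}; there is no genuinely hard step, and the main thing to get right is to use the supremal (or current-error, per the remark after \cref{thm:rci1}) tracking-error bound rather than a pointwise-in-time value.

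I would close by noting that the companion tightening of the actuator constraint follows by the identical argument applied to the ancillary term $\kappa(x,x^*) = u - u^*$: this difference is bounded because $|s|\le\Phi$ inside the boundary layer and because $F$, $B$, and $\hat{F}$ in \cref{eq:u} are evaluated at states lying in the tube, which gives $\|P_u u + q_u\| \le \|P_u u^* + q_u\| + \|P_u(u - u^*)\|$ and hence the analogue of \cref{eq:xtight} with $c_u - \|P_u(u - u^*)\|$ on the right-hand side.
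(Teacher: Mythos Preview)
Your proposal is correct and follows essentially the same route as the paper: invoke \cref{thm:rci1} to guarantee that $\tilde{x}$ stays in the RCI tube, split $P_x x + q_x = (P_x x^* + q_x) + P_x\tilde{x}$, and apply the triangle inequality for the $2$-norm. Your care in reading $\|P_x\tilde{x}\|$ as a worst-case bound over the tube is more explicit than the paper's terse sketch, but the substance is identical.
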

\begin{proof}
Recall that Theorem \ref{thm:rci1} established that the boundary layer controller induces is a RCI tube with geometry given by \cref{eq:bound1}. 
Then, the state is always upper bounded by $x \leq x^* + |\tilde{x}|$. Substituting this bound into the state constraint \cref{eq:constaints} and using the triangle inequality, the result is obtained. 
\end{proof}

Tightening the actuator constraints is more complicated since the control law in \cref{eq:u} depends on the current state $x$.
However, the tracking error bound can be used to obtain an upper bound on the control input that is only a function of the boundary layer thickness, desired state, and dynamics.
It is helpful to put the controller into a more useful form for the following theorem
\begin{equation}
\begin{split}
u = B(x)^{-1} \Biggl[  x^{*(n)}  -\hat{F}(x)-   \sum_{k=1}^{n-1} & {n-1 \choose k-1}  \lambda^{n-k}  \tilde{x}^{(k)}   \\
 & - K(x)\text{sat}\left(\nicefrac{s}{\Phi}\right) \Biggr],
\end{split}
\label{eq:uff}
\end{equation}
where the first term is the feedforward (and hence the decision variable in the optimization) and the last three are the feedback terms.

\begin{namedtheorem}[Control Input Upper Bound]
Assume that the control law is given by \cref{eq:uff}. Then, the control input is upper bounded, for all realizations of the uncertainty, by
\begin{equation}
u \leq \bar{B}^{-1}\Biggl[ x^{*(n)} +  \bar{F} + \sum_{k=1}^{n-1} {n-1 \choose k-1} \lambda^{n-k} | \tilde{x}^{(k)}|  + \bar{K} \Biggr],
\label{eq:ubound}
\end{equation}
where 
\begin{align}
\bar{B}^{-1} &= \max\left\{ B^{-1}\left(\ubar{x}\right),  B^{-1}\left(\bar{x}\right)\right\}, \label{eq:bbound} \\
\bar{F} &= \max\left\{ \left|\hat{F}\left(\ubar{x}\right)\right|,  \left|\hat{F}\left(\bar{x}\right)\right| \right\},\label{eq:fbound} \\
\bar{K} &= \max\left\{ K\left(\ubar{x}\right),  K\left(\bar{x}\right)\right\}, \label{eq:kbound}
\end{align}
with $\ubar{x}:= x^* - |\tilde{x}|$,  $\bar{x}:= x^* + |\tilde{x}|$, and $\max\left\{\cdot\right\}$ is the element-wise maximum.
\label{thm:ubound}
\end{namedtheorem}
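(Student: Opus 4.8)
The plan is to bound the control law in the form \cref{eq:uff} term by term, using the tracking error bound $|\tilde{x}|$ to replace the unknown true state $x$ by a worst case taken over the box $[\ubar{x},\bar{x}]$ with $\ubar{x}=x^*-|\tilde{x}|$ and $\bar{x}=x^*+|\tilde{x}|$. First I would note that by Theorem~\ref{thm:rci1} the error vector $\tilde{z}$ stays in the RCI tube, so in particular each tracking-error derivative $\tilde{x}^{(k)}$ is bounded and $x$ is confined to the box $[\ubar{x},\bar{x}]$. The feedforward term $x^{*(n)}$ is a decision variable and passes through unchanged (up to the $\bar B^{-1}$ factor once we pull out $B(x)^{-1}$). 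For the remaining three feedback terms: $-\hat F(x)$ is bounded in absolute value by $\bar F$ as defined in \cref{eq:fbound} because $x\in[\ubar{x},\bar{x}]$ and we take the element-wise max over the two corners; the summation term is bounded directly by replacing $\tilde{x}^{(k)}$ with $|\tilde{x}^{(k)}|$ and the binomial/$\lambda$ coefficients are nonnegative; and the saturation term $K(x)\operatorname{sat}(s/\Phi)$ is bounded in magnitude by $K(x)\le\bar K$ since $|\operatorname{sat}(\cdot)|\le 1$ element-wise and, again, $x\in[\ubar{x},\bar{x}]$.

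The one genuinely delicate step is handling the leading factor $B(x)^{-1}$, since it multiplies a bracketed quantity whose sign is not fixed, so we cannot simply pass absolute values through a product and bound each factor independently without care about signs and about $B^{-1}$ being a matrix rather than a scalar. I would address this by interpreting \cref{eq:bbound}--\cref{eq:ubound} element-wise and row-wise: for each component of $u$, expand the matrix-vector product, apply the triangle inequality to the sum over columns, and bound $|[B(x)^{-1}]_{ij}|$ by the corresponding entry of $\bar B^{-1}$ (the element-wise max over the two box corners). This is where the assumption, implicit in the construction, that $B$ is invertible on the relevant region and that the corner evaluations actually dominate, is used; in the scalar-per-output case covered by Assumption~4 this reduces to a clean inequality and matches the stated form \cref{eq:ubound}. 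I expect the monotonicity/corner-domination claim — that the max over $\{\ubar{x},\bar{x}\}$ upper-bounds the quantity over the whole box — to be the main thing to justify carefully; the paper appears to treat the box as small enough (or the relevant functions as monotone coordinate-wise) that evaluating at the two corners suffices.

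Having bounded each of the four terms, I would collect them: $\bar B^{-1}$ times the sum of the feedforward $x^{*(n)}$, the dynamics bound $\bar F$, the error-derivative sum $\sum_{k=1}^{n-1}\binom{n-1}{k-1}\lambda^{n-k}|\tilde{x}^{(k)}|$, and the gain bound $\bar K$, which is exactly \cref{eq:ubound}. Since every bound used ($|\tilde{x}^{(k)}|$ from the RCI tube, $|\operatorname{sat}|\le1$, the corner maxima) holds for every admissible realization of $\tilde f$ and $d$, the resulting inequality is uniform over all realizations of the uncertainty, which completes the argument. The tightened actuator constraint then follows by substituting this upper bound into \cref{eq:constaints}, analogously to Corollary~\ref{cor:xbound}.
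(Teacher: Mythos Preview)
Your proposal is correct and follows essentially the same route as the paper: confine $x$ to the box $[\ubar{x},\bar{x}]$ via the RCI tracking-error bound, then upper-bound $B(x)^{-1}$ and each of the three feedback terms by evaluating at the two corners and taking the element-wise maximum. If anything, you are more careful than the paper, which simply asserts the corner-evaluation step without discussing the sign/monotonicity caveat you flag.
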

\begin{proof}
The tracking error bound can be leveraged to eliminate the state-dependency in \cref{eq:uff}.
Specifically, the state is bounded by
\begin{equation}
x^* - |\tilde{x}| \leq x \leq x^* + |\tilde{x}|,
\label{eq:xbound}
\end{equation}
where $|\tilde{x}|$ is the solution to \cref{eq:bound1} when equality is imposed.
It is clear from \cref{eq:uff} that to upper bound $u$, the inverse of the input matrix $B^{-1}$ and the last three feedback terms should be maximized.
Define $\ubar{x}:=x^* - |\tilde{x}|$ and $\bar{x}:= x^* + |\tilde{x}|$, then using \cref{eq:xbound},  each term in \cref{eq:uff} can be upper bounded by evaluating at $\ubar{x}$ and $\bar{x}$ and taking the maximum, resulting in \cref{eq:bbound,eq:fbound,eq:kbound} and hence \cref{eq:ubound}.
\end{proof}

The bound established by Theorem \ref{thm:ubound} can be put into a more concise form
\begin{equation}
u \leq \bar{B}^{-1} \left[u^* + \bar{u}_{fb} \right],
\end{equation}
where $u^* := x^{*(n)}$ and $\bar{u}_{fb}$ is the sum of the last three terms in \cref{eq:ubound}.
Using Theorem \ref{thm:ubound}, the following corollary establishes the tightened actuator constraint.
\begin{namedcorollary}[Tightened Actuator Constraint]
Assume the control law \cref{eq:u} is used as an ancillary controller with associated RCI tube $\mathcal{B}$ and upper bound on input due to feedback $\bar{u}_{fb}$. 
Then, the following modified actuator constraint
\begin{equation}
\| P_u \bar{B}^{-1}u^* + q_u\| \leq c_u - \|P_u\bar{B}^{-1}\bar{u}_{fb}\|,
\label{eq:utight}
\end{equation}
guarantees, for all realization of the uncertainty, the true constraint is satisfied.
\label{cor:ubound}
\end{namedcorollary}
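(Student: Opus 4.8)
The plan is to mirror the argument already used for Corollary~\ref{cor:xbound}, but with the control‑input bound of Theorem~\ref{thm:ubound} playing the role that the RCI‑tube geometry of Theorem~\ref{thm:rci1} played there. The key observation is that Theorem~\ref{thm:ubound} has already done the hard work: it produces an upper bound on the realized closed‑loop input $u$ that no longer depends on the particular disturbance or model error, only on the decision variable $u^{*}=x^{*(n)}$, the (equality‑saturated) tracking‑error bound $|\tilde x|$, and the nominal dynamics. Concretely, for every admissible uncertainty we have, componentwise, $u \le \bar B^{-1}u^{*}+\bar B^{-1}\bar u_{fb}$ with $\bar u_{fb}\ge 0$, i.e. $u = \bar B^{-1}u^{*}+\delta$ for some vector $\delta$ with $0\le \delta \le \bar B^{-1}\bar u_{fb}$ componentwise.

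From here the steps are short. First I would substitute this decomposition into the actuator constraint in \cref{eq:constaints}, writing $P_u u + q_u = \bigl(P_u\bar B^{-1}u^{*}+q_u\bigr) + P_u\delta$. Second, I would apply the triangle inequality in the $2$‑norm to get $\|P_u u + q_u\| \le \|P_u\bar B^{-1}u^{*}+q_u\| + \|P_u\delta\|$, and then bound $\|P_u\delta\|$ by $\|P_u\bar B^{-1}\bar u_{fb}\|$ using the componentwise bound on $\delta$ (being slightly conservative with signs exactly as in the proof of Corollary~\ref{cor:xbound}). Third, I would invoke the hypothesis: if the tightened constraint \cref{eq:utight}, $\|P_u\bar B^{-1}u^{*}+q_u\| \le c_u - \|P_u\bar B^{-1}\bar u_{fb}\|$, holds, then chaining the two inequalities gives $\|P_u u + q_u\| \le c_u$, i.e. the true actuator constraint is met for every realization of the uncertainty. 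That closes the proof.

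The step I expect to require the most care — and the one the paper treats loosely — is the passage from the componentwise inequality $0\le\delta\le\bar B^{-1}\bar u_{fb}$ to the Euclidean‑norm inequality $\|P_u\delta\|\le\|P_u\bar B^{-1}\bar u_{fb}\|$, since $P_u$ need not be sign‑definite and so a componentwise bound on $\delta$ does not in general translate directly into one on $P_u\delta$. The clean fix is to replace $P_u$ (and $\bar B^{-1}$) by their entrywise absolute values when forming the worst‑case feedback term, which only makes the tightened set smaller; I would state the corollary with that understanding, consistent with how $\|P_x\tilde x\|$ is used in Corollary~\ref{cor:xbound}. A secondary point worth a sentence is that $u^{*}=x^{*(n)}$ is genuinely the MPC decision variable, so \cref{eq:utight} is a constraint the optimizer can enforce directly, and that $\bar u_{fb}$, $\bar B^{-1}$ themselves depend (through $|\tilde x|$ and $\Phi$) on the augmented tube states, which is precisely why this tightening can be carried inside the optimization rather than being fixed offline.
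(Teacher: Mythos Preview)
Your proposal is correct and follows essentially the same route as the paper: invoke Theorem~\ref{thm:ubound} to get $u \le \bar B^{-1}[u^{*}+\bar u_{fb}]$, substitute into the actuator constraint of \cref{eq:constaints}, and apply the triangle inequality. The paper's own proof is just those three sentences; your decomposition $u=\bar B^{-1}u^{*}+\delta$ and the explicit chaining are simply a more careful write-up of the same idea, and your caveat about the passage from the componentwise bound on $\delta$ to the $2$-norm bound $\|P_u\delta\|\le\|P_u\bar B^{-1}\bar u_{fb}\|$ correctly flags a step the paper leaves implicit.
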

\begin{proof}
Theorem \ref{thm:ubound} established the upper bound on the control input to be $u \leq \bar{B}^{-1}\left[ u^* + \bar{u}_{fb}\right]$. Substituting this bound into the actuator constraint \cref{eq:constaints} and using the triangle inequality, the result is obtained. 
\end{proof}

\subsection{Optimized Tube Geometry}
For many autonomous systems, the ability to react to changing operating conditions is crucial for maximizing performance.
For instance, a UAV performing obstacle avoidance should modify the aggressiveness of the controller based on the current obstacle density to minimize expended energy.
Formally, the tube geometry must be added as a decision variable in the optimization to achieve this behavior.
DTMPC is able to optimize the tube geometry because of the simple relationship between the tube geometry, control bandwidth, and level of uncertainty given by \cref{eq:phidot}.
This is one of the distinguishing features of DTMPC since other state-of-the-art nonlinear tube MPC algorithms are not able to establish an explicit relationship like \cref{eq:phidot}.

In \cref{sec:blsc}, it was shown that the control bandwidth $\alpha$ is responsible for how the uncertainty affects the sliding variable $s$. 
Subsequently, the choice of $\alpha$ influences the tube geometry (via \cref{eq:phidot}) and control gain (via \cref{eq:K}).
In order to maintain continuity in the control signal, the tube geometry dynamics are augmented such that $\alpha$ and $\Phi$ remain smooth.
More precisely, the augmented tube dynamics are
\begin{align}
\dot{\Phi} &= -\alpha \Phi + \Delta(x^*) + D + \eta, \nonumber\\
\dot{\alpha} &= v,
\end{align}
where $v \in \mathbb{V}$ is an artificial input that will serve as an additional decision variable in the optimization. 
It is easy to show that the above set of differential equations is stable so long as $\alpha$ remains positive.

\subsection{Complete Formulation}
With Corollary \ref{cor:xbound} and \ref{cor:ubound} establishing the tightened state and actuator constraints, the Dynamic Tube MPC optimization can now be formulated as

\begin{problem}{Dynamic Tube MPC}
\begin{equation*}
\begin{aligned}
& \underset{\check{u}(t),\check{v}(t)}{\text{min}} & & J = h(\check{x}(t_f)) + \int\limits_{t_0}^{t_f} \ell (\check{x}(t),\check{u}(t),\check{\alpha}(t),\check{v}(t))dt \\
& \text{subject to} & & \dot{\check{x}}(t) = \hat{f}(\check{x}(t)) + b(\check{x}(t))\check{u}(t),  ~  \dot{\check{\alpha}}(t) = \check{v}(t), \\
& && \dot{\Phi}(t) = -\check{\alpha}(t) \Phi(t) + \Delta(\check{x}(t)) + D + \eta, \\
& && \dot{\Omega}(t) = A_c \Omega(t) + B_c \Phi(t), ~~ \Omega(t_0) = |\tilde{x}(t_0)|, \\
& && \check{x}(t_0) = x^*_0, ~~ \Phi(t_0) = \Phi_0, ~~ \check{x}\left(t_f\right) = x^*_f, \\
& && \check{x}(t) \in \mathbb{\bar{X}}, ~~ \check{u}(t) \in \mathbb{\bar{U}}, ~~ \check{\alpha}(t) \in \mathbb{A} , ~~ \check{v}(t) \in \mathbb{V},
\end{aligned}
\end{equation*}
\label{prob:dtmpc}
\end{problem}
\vskip -0.1in

\noindent where $\check{\cdot}$ denotes the internal variables in the optimization; $\Omega$ is the tube geometry with matrices $A_c$ and $B_c$ given by putting \cref{eq:s} into controllable canonical form; $\mathbb{\bar{X}}$ and  $\mathbb{\bar{U}}$ are the tightened state and actuator constraints; and $\ell$ and $h$ are the quadratic state and terminal cost.
The output of DTMPC is an optimal open-loop (i.e., feedforward) control input $u^*$, trajectory $x^*$, and control bandwidth $\alpha^*$. 
%Recursive feasibility of the nominal state $x^*$ with terminal equality constraints follows from \cite{keerthi1988optimal}.

DTMPC is inherently a non-convex optimization problem because of the nonlinear dynamics.
However, non-convexity is a fundamental characteristic of nonlinear tube MPC and a number of approximate solution procedures have been proposed.
The key takeaway, though, is that Problem \ref{prob:dtmpc} is a nonlinear tube MPC algorithm that \textit{simultaneously} optimizes the open-loop trajectory and tube geometry, eliminating the duality gap in standard tube MPC.
Furthermore, conservativeness can be reduced since Problem \ref{prob:dtmpc} is able to leverage state-dependent uncertainty to select an open-loop trajectory based on the structure of the uncertainty and proximity to constraints.
The benefits of these properties, in addition to combining the tube geometry and error dynamics, will be demonstrated in \cref{sec:results}.

%\subsection{Sequential Convex Programming \bxx{cite}}
%The challenges of non-convex optimization are well documented.
%In this work, sequential convex programming (SCP) is used to solve Problem \ref{prob:dtmpc}.
%Fundamentally, SCP entails finding an affine/convex approximation for the equality/inequality constraints and iteratively solving the resulting convex problem until convergence.
%SCP has been applied to a vast number of non-convex optimal control problems, and is most applicable to problems where feasibility is more critical than true optimality. 
%Because the problems of interest fall into the former category, it is an intelligent choice for solving Problem \ref{prob:dtmpc}. 

%%%%%%%%%%%%%%%%%%%%%%%%%%%%%%%%%%%%%%%%%%%%%%%%%%%%%%%%%%%%%%%%%%%%%%%%%%%%%%%%

\section{COLLISION AVOIDANCE MODEL}
\label{sec:cam}

\subsection{Overview}
Collision avoidance is a fundamental capability for many autonomous systems, and is an ideal domain to test DTMPC for two reasons.
First, enough safety margin must be allocated to prevent collisions when model error or disturbances are present.
More precisely, the optimizer must leverage knowledge of the peak tracking error (given by the tube geometry) to prevent collisions. 
The robustness of DTMPC and ability to utilize knowledge of state dependent uncertainty can thus be demonstrated.
Second, many real-world operating environments have variable obstacle densities so the tube geometry can be optimized in response to a changing environment.
The rest of this section presents the model and formal optimal control problem.

\subsection{Model}
This work uses a double integrator model with nonlinear drag, which describes the dynamics of many mechanical systems.
%because many physical systems can be described with this model.
Let $r=\left[r_x~r_y~r_z\right]^T$ be the inertial position of the system that is to be tracked.
The dynamics are
\begin{equation}
\ddot{r} = - C_d \left\|  \dot{r} \right\| \dot{r} + g + u + d, 
\label{eq:model}
\end{equation}
where $g \in \mathbb{R}^3$ is the gravity vector, $C_d$ is the unknown but bounded drag coefficient ($0 \leq C_d \leq \bar{C}_d$), and $d$ is a bounded disturbance ($|d| \leq D$). 
From \cref{eq:u}, the control law is  
\begin{equation}
u =  \hat{C}_d \left\|  \dot{r} \right\|  \dot{r} + \ddot{r}^* - \lambda \dot{\tilde{r}} - K \text{sat}\left(\nicefrac{s}{\Phi}\right), 
\end{equation}
where $\hat{C}_d$ is the best estimate of the drag coefficient, $s = \dot{\tilde{r}} + \lambda \tilde{r}$, and 
\begin{align}
K &= \bar{C}_d \left\| \dot{r} \right\| \left| \dot{r} \right|  - \bar{C}_d \left\| \dot{r}^* \right\| \left| \dot{r}^* \right| + \alpha \Phi, \\
\dot{\Phi} &= - \alpha^* \Phi + \bar{C}_d \left\|  \dot{r}^* \right\|  \left| \dot{r}^* \right| + D + \eta.
\label{eq:phi_model}
\end{align}

\subsection{Collision Avoidance DTMPC}
Let $H$, $p_{c}$, and $r_{o}$ denote the shape, location, and size of an obstacle.
The minimum control effort DTMPC optimization with collision avoidance for system \cref{eq:model} is formulated as
\begin{problem}{Collision Avoidance DTMPC}
\begin{equation*}
\begin{aligned}
& \underset{\check{u}(t),\check{v}(t)}{\text{min}} & & J =  \int\limits_{t_0}^{t_f} \left[ \check{u}(t)^TQ\check{u}(t)+ \tilde{\alpha}(t)^T R	 \tilde{\alpha}(t) \right]dt \\
& \text{subject to} & & \ddot{\check{r}}(t) = - \hat{C}_d \left\|  \dot{\check{r}}(t) \right\| \dot{\check{r}}(t) + g + \check{u}(t),  ~  \dot{\alpha}(t) = v(t), \\
& && \dot{\check{\Phi}}(t) = - \check{\alpha}(t) \check{\Phi}(t) + \bar{C}_d \left\|  \dot{\check{r}}(t) \right\|  \left| \dot{\check{r}}(t) \right| + D + \eta, \\
& && \dot{\check{\Omega}}(t) = A_c\check{\Omega}(t) + B_c \check{\Phi}(t), ~~ \check{\Omega}(t_0) = |\tilde{r}(t_0)|, \\
& && \check{r}\left(t_0\right) = r^*_0, ~~ \check{\Phi}\left(t_0\right) = \Phi_0, ~~ \check{r}\left(t_f\right) = r^*_f, \\
& && \|H_i r(t) - p_{c,i} \| \geq r_{o,i} +  \| H_i\tilde{r}(t)\|, ~ i=1:N_{o}, \\
& && | \dot{\check{r}}(t) | \leq \dot{r}_m - |\dot{\tilde{r}}|, ~~  \| u^*(t) \| \leq u_m - \bar{u}_{fb}, \\
& &&  |v(t)| \leq v_m,~0 < \ubar{\alpha} \leq \check{\alpha}(t) \leq \bar{\alpha}, ~ \tilde{\alpha}=\check{\alpha}(t) - \ubar{\alpha}, 
\end{aligned}
\end{equation*}
\label{prob:cadtmpc}
\end{problem}
\vskip -0.15in
\noindent where again $\check{\cdot}$ denotes the internal variables of the optimization, $|\cdot|$ is the element-wise absolute value, $\ubar{\alpha}$ and $\bar{\alpha}$ are the upper and lower bounds of the control bandwidth, $\dot{r}_m$ is the peak desired speed, $v_m$ is the max artificial input, and $N_o$ is the number of obstacles.

%%%%%%%%%%%%%%%%%%%%%%%%%%%%%%%%%%%%%%%%%%%%%%%%%%%%%%%%%%%%%%%%%%%%%%%%%%%%%%%%

\section{SIMULATION ENVIRONMENT}
\label{sec:sim}

DTMPC was tested in simulation to demonstrate its ability to optimize tube geometry and utilize knowledge of state-dependent uncertainty through an environment with obstacles. 
The obstacles were placed non-uniformly to emulate a changing operating condition (i.e., dense/open environment).
In order to emphasize both characteristics of DTMPC, three test cases were conducted.
First, the bandwidth was optimized when both the model and obstacle locations were completely known.
Second, the bandwidth was again optimized with a known model but the obstacle locations were unknown, requiring a receding horizon implementation.
Third, state-dependent uncertainty is considered but control bandwidth is kept constant.
Nothing about the formulation prevents optimizing bandwidth and leveraging state-dependent uncertainty simultaneously in a receding horizon fashion, this decoupling is only for clarity.
The tracking error \cref{eq:bound1} is used to tighten the obstacle and velocity constraint.

Problem \ref{prob:cadtmpc} is non-convex due to the nonlinear dynamics and non-convex obstacle constraints so sequential convex programming, similar to that in \cite{mao2016successive}, was used to obtain a solution. 
The optimization was initialized with a na\"ive straight-line solution and solved using YALMIP \cite{Lofberg2004} and MOSEK \cite{mosek2010mosek} in MATLAB.
If large perturbations to the initial guess are required to find a feasible solution, then warm starting the optimization with a better initial guess (possibly provided by a global geometric planner) might be necessary.
For the cases tested in this work, the optimization converged within three to four iterations -- fast enough for real-time applications.
The simulation parameters are summarized in \cref{tab:params}.

\begin{table}[t!]
	\begin{center}
		\caption{Simulation Parameters.}
		\label{tab:params}
		\begin{tabularx}{0.45\textwidth}{X X | X  X}
			\hline \hline 
			\textbf{Param.} & \textbf{Value} & \textbf{Param.} & \textbf{Value}  \\
			\hline \rule{0pt}{1.1\normalbaselineskip}$r_0$ & [0 0 1]$^T$ m& $r_f$ & [0 25 1]$^T$ m \\ 
			$\dot{r}_0$ & [0 1 0]$^T$ m/s& $\dot{r}_f$ & [0 1 0]$^T $m/s \\
			$\lambda$ & [2 2 2]$^T$ rad/s & $R_f$ & 2$I_3$ \\ 
			$Q$ & 2$I_3$ & $R$ & 0.1$I_3$ \\ 
			$\ubar{\alpha}$ & 0.5 rad/s & $\bar{\alpha}$ & 4 rad/s \\
			$u_m$ & 5 m/s$^2$ & $v_m$ & 2 rad/s$^2$  \\
			$\dot{r}_m$ & 2.5 m/s & $D$ & 0.5 m/s$^2$   \\
			$\hat{C}_d$ & 0.1 kg/m & $\bar{C}_d$ & 0.2 kg/m\\
			$t_f$ & 14 s & $N_o$ & 5 \\
			$\eta$ & 0.1 rad/s$^2$ & - & -\\
			\hline \hline
		\end{tabularx}
	\end{center}
	\vskip -0.3in
\end{table}

%%%%%%%%%%%%%%%%%%%%%%%%%%%%%%%%%%%%%%%%%%%%%%%%%%%%%%%%%%%%%%%%%%%%%%%%%%%%%%%%

\section{RESULTS AND ANALYSIS}
\label{sec:results}

\subsection{Optimized Tube Geometry}

The first test scenario for DTMPC highlights its ability to simultaneously optimize an open-loop trajectory and tube geometry in a known environment with obstacles placed non-uniformly.
\cref{fig:dtmpc_bw} shows the open-loop trajectory (multi-color), tube geometry (black), and obstacles (grey) when DTMPC optimizes both the trajectory and tube geometry.
The color of the trajectory indicates the spatial variation of the control bandwidth, where low- and high-bandwidth are mapped to dark blue and yellow, respectively. 
It is clear that the bandwidth changes dramatically along the trajectory, especially in the vicinity of obstacles.
The insets in \cref{fig:dtmpc_bw} show that high-bandwidth (compact tube geometry) is used for the narrow gap and slalom and low-bandwidth (large tube geometry) for open space.
Hence, high-bandwidth control is only used when the system is in close proximity to constraints (i.e., obstacles), consequently limiting aggressive control inputs to only when they are absolutely necessary.
Thus, DTMPC can react to varying operating conditions by modifying the trajectory and tube geometry appropriately.

Since the tube geometry changes dramatically along the trajectory, it is important to verify that the tube remains invariant.
This was tested by conducting 1000 simulations of the closed-loop system with a disturbance profile sampled uniformly from the disturbance set $\mathbb{D}$.
\cref{fig:dtmpc_bw_mc} shows the nominal trajectory (red), each closed-loop trial run (blue), tube geometry (black), and obstacles (grey).
The inserts show that the state stays within the tube, even as the geometry changes, which verifies that the time-varying tube remains invariant. 

\subsection{Receding Horizon Optimized Tube Geometry}

In many situations the operating environment is not completely known and requires a receding horizon implementation.
The second test scenario for DTMPC highlights its ability to simultaneously optimize an open-loop trajectory and tube geometry in a unknown environment.
\cref{fig:rhc} shows a receding horizon implementation of DTMPC where only a subset of obstacles are known (dark-grey) and the rest are unknown (light-grey).
The bandwidth along the trajectory is visualized with the color map where low- and high-bandwidth are mapped to dark blue and yellow.
The first planned trajectory (\cref{fig:rhc-1}) uses high-bandwidth at the narrow gap and low-bandwidth in open space. 
When the second and third set of obstacles are observed, \cref{fig:rhc-2} and \cref{fig:rhc-3} respectively, DTMPC modifies the trajectory to again use high-bandwidth when in close-proximity to newly discovered obstacles.
This further demonstrates DTMPC's ability to construct an optimized trajectory and tube geometry in response to new obstacles.

\begin{figure}[t!]
\centering
\includegraphics[width=0.452\textwidth]{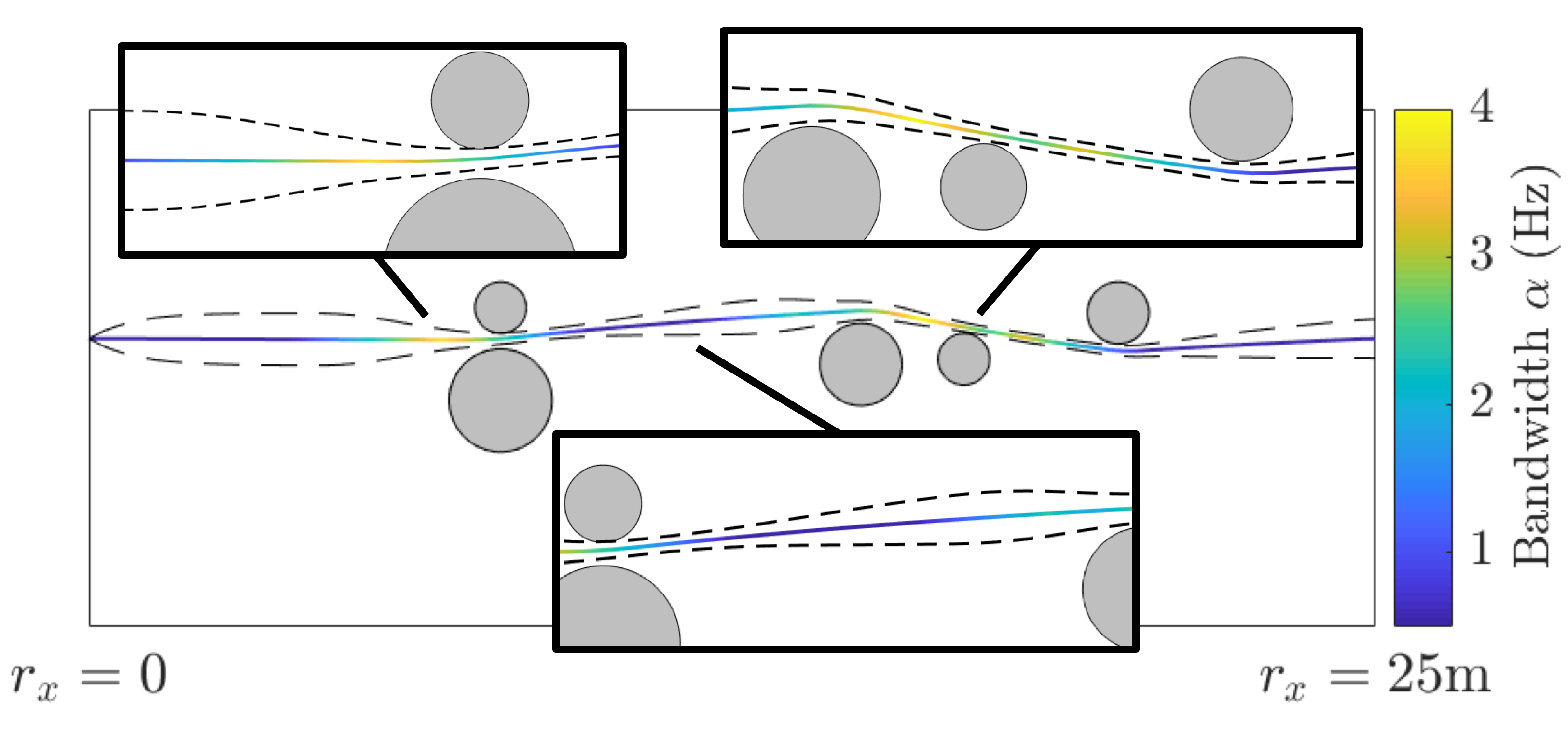}
 \vskip -0.01in
\caption{DTMPC simultaneously optimizing an open-loop trajectory (multi-color) and tube geometry (black) around obstacles (grey). High-bandwidth control (yellow) is used when in close proximity to obstacles while low-bandwidth control (dark blue) is used in open space.}
\label{fig:dtmpc_bw}
%\vskip -0.12in
%\end{figure}
%
%\begin{figure}[t!]
\centering
\includegraphics[width=0.452\textwidth]{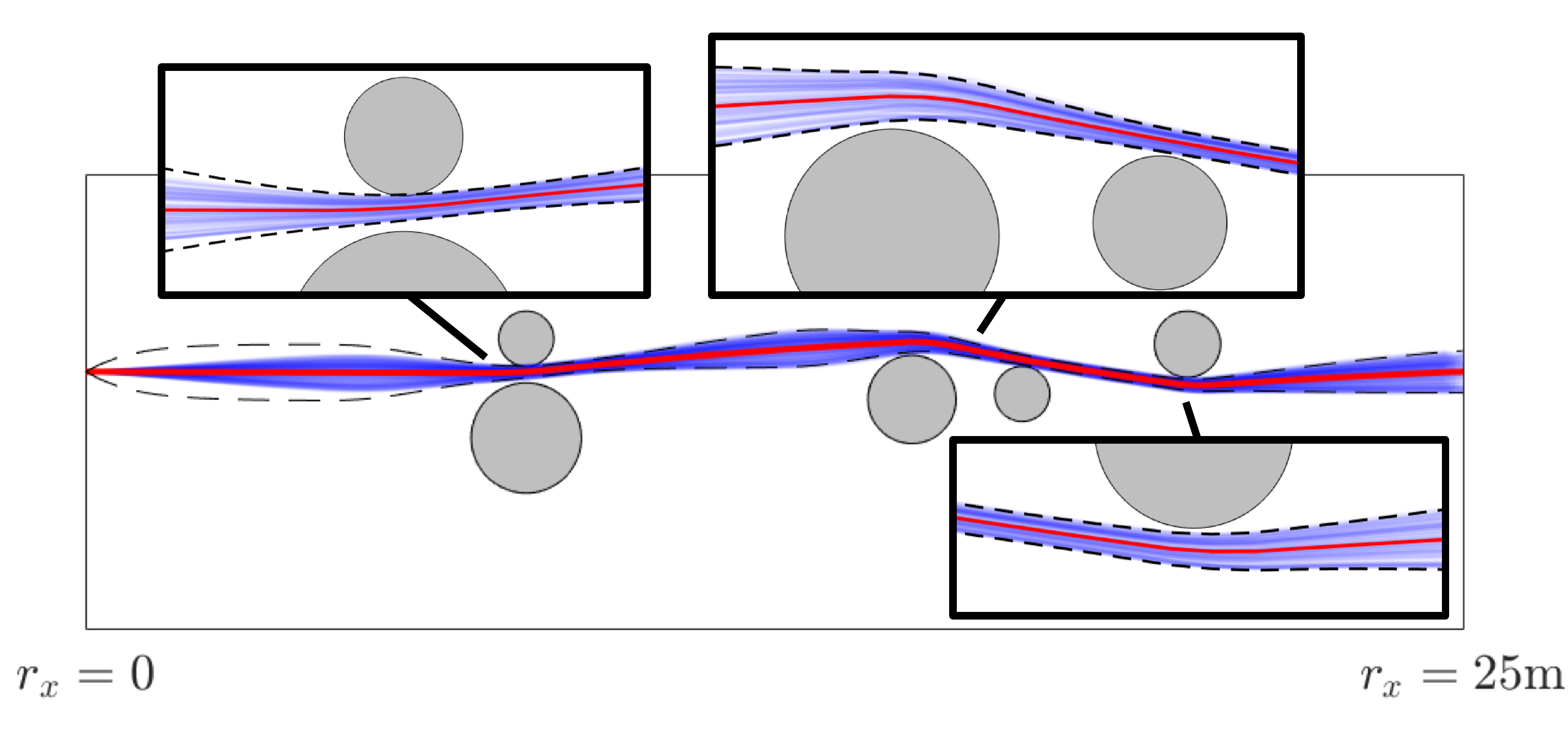}
 \vskip -0.01in
\caption{Monte Carlo verification that the time-varying boundary layer in DTMPC remains a robust control invariant tube. The closed-loop system (blue) was simulated with a different disturbance profile uniformly sampled from the disturbance set.}
\label{fig:dtmpc_bw_mc}
\vskip -0.2in
\end{figure}

\begin{figure}[!t]
\centering
\subfloat[Planned trajectory at $t$ = 0s.]{\includegraphics[width=0.452\textwidth]{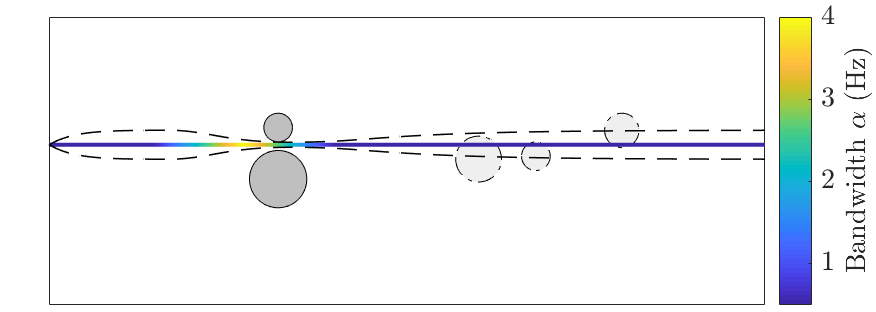}
\label{fig:rhc-1}}
 \\
 \vskip -0.01in
\subfloat[Planned trajectory a $t$ = 6s.]{\includegraphics[width=0.452\textwidth]{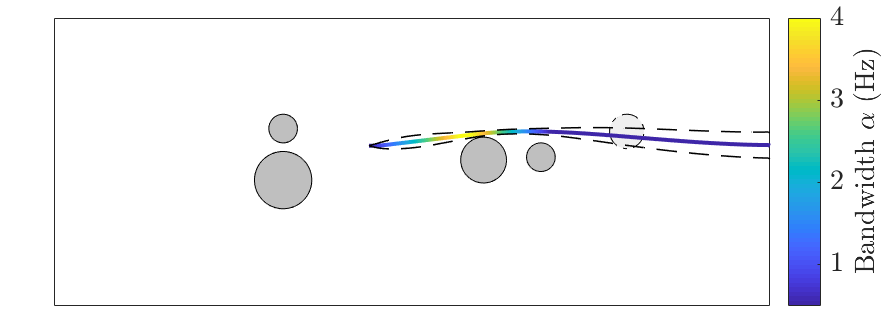}
\label{fig:rhc-2}}
\\
 \vskip -0.01in
\subfloat[Planned trajectory at $t$ = 8s.]{\includegraphics[width=0.452\textwidth]{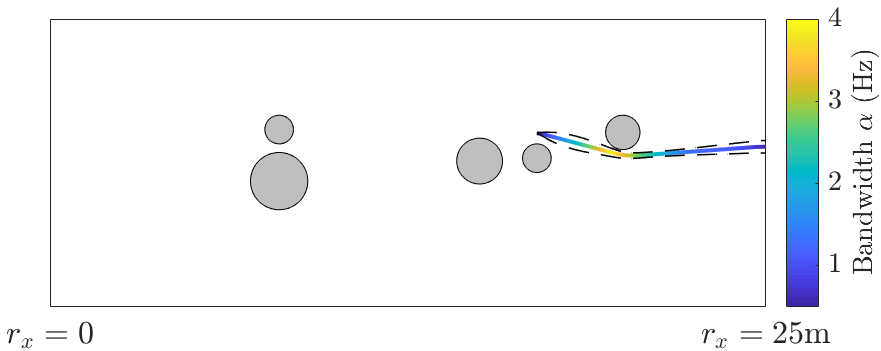}
\label{fig:rhc-3}}
 \vskip -0.01in
\caption{Receding horizon implementation of DTMPC with known (dark-grey) and unknown (light-grey) obstacles. The bandwidth along trajectory (multi-color) varies, resulting in a dynamic tube geometry (black). (a): First planned trajectory and tube geometry when only the first two obstacles are known. (b): New planned trajectory when the next two obstacles are observed. (c): New planned trajectory when the last obstacle is observed. }
\label{fig:rhc}
\vskip -0.2in
\end{figure}
 
\subsection{State-Dependent Uncertainty}

The third test scenario for DTMPC highlights its ability to leverage knowledge of state-dependent uncertainty, in this case arising from an unknown drag coefficient.
From \cref{eq:phi_model}, the uncertainty scales with the square of the velocity so higher speeds increase uncertainty.
\cref{fig:dtmpc_state} shows the open-loop trajectory (multi-color), tube geometry (black), and obstacles (grey) when DTMPC leverages state-dependent uncertainty.
The color of the trajectory is an indication of the instantaneous speed, where low and high speed are mapped to black and peach, respectively.  
It is clear that DTMPC generates a speed profile modulated by proximity to obstacles.
For instance, using the insets in \cref{fig:dtmpc_state}, the speed is lower (darker) when the trajectory goes through the narrow gap and around the other obstacles; reducing uncertainty and tightening the tube geometry.
Further, the speed is higher (lighter) when in the open, subsequently increasing uncertainty causing the tube geometry to expand.
If the state-dependent uncertainty is just assumed to be bounded, a simplification often made out of necessity in other tube MPC algorithms, the tube geometry is so large that, for this obstacle field, the optimization is infeasible with the same straight-line initialization as DTMPC.
Hence, DTMPC is able to leverage knowledge of state-dependent uncertainty to reduce conservatism and improve feasibility.

\begin{figure}[t!]
\centering
\includegraphics[width=0.452\textwidth]{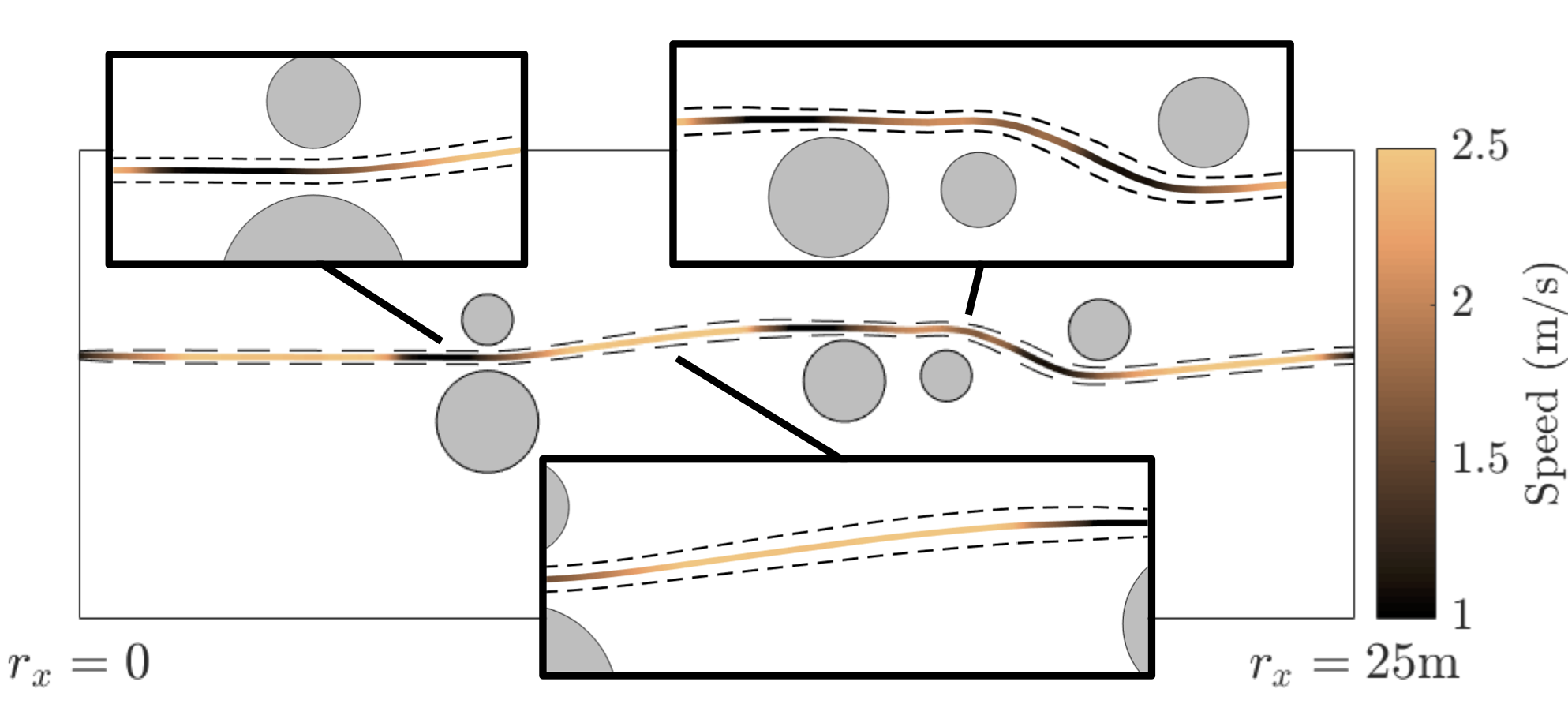}
 \vskip -0.01in
\caption{DTMPC leveraging state-dependent uncertainty to robustly avoid obstacles (grey). The speed along the trajectory, given by the color map, is low (dark) when in close proximity to obstacles and is high (light) in open regions. This causes the tube geometry (black) to contract and expand.}
\label{fig:dtmpc_state}
\vskip -0.2in
\end{figure}

%\begin{figure}[t!]
%\centering
%\includegraphics[width=0.475\textwidth]{figures/state_comp_2_red}
%\caption{Nominal MPC ignoring uncertainty avoiding obstacles (grey). A uniform speed, given by the color map, is observed along the trajectory, a result of ignoring state-dependent uncertainty. If the state-dependent uncertainty is just assumed to be bounded, the tube geometry (black) is so large that the optimization is infeasible.}
%\label{fig:state}
%\vskip -0.2in
%\end{figure}

%%%%%%%%%%%%%%%%%%%%%%%%%%%%%%%%%%%%%%%%%%%%%%%%%%%%%%%%%%%%%%%%%%%%%%%%%%%%%%%%

\section{CONCLUSIONS}
This work presented the Dynamic Tube MPC (DTMPC) algorithm that addresses a number of shortcomings of existing nonlinear tube MPC algorithms. 
First, the open-loop MPC optimization is augmented with the tube geometry dynamics enabling the trajectory and tube to be optimized simultaneously. 
Second, DTMPC is able to utilize state-dependent uncertainty to reduce conservativeness and improve optimization feasibility.
And third, the tube geometry and error dynamics can be combined to further reduce conservativeness.
All three of these properties were made possible by leveraging the simplicity and robustness of boundary layer sliding control.
Simulation results showed that DTMPC is able to control the tube geometry size, by changing control bandwidth or leveraging state-dependent uncertainty, in response to changing operating conditions.
%Future work includes collision avoidance hardware experiments, applying DTMPC to other interesting domains, and expanding DTMPC to more general nonlinear systems. 
Future work includes expanding DTMPC to more general nonlinear systems. 

\addtolength{\textheight}{-12cm}   % This command serves to balance the column lengths
                                  % on the last page of the document manually. It shortens
                                  % the textheight of the last page by a suitable amount.
                                  % This command does not take effect until the next page
                                  % so it should come on the page before the last. Make
                                  % sure that you do not shorten the textheight too much.

%%%%%%%%%%%%%%%%%%%%%%%%%%%%%%%%%%%%%%%%%%%%%%%%%%%%%%%%%%%%%%%%%%%%%%%%%%%%%%%%

%%%%%%%%%%%%%%%%%%%%%%%%%%%%%%%%%%%%%%%%%%%%%%%%%%%%%%%%%%%%%%%%%%%%%%%%%%%%%%%%

\section*{ACKNOWLEDGMENT} 
This material is based upon work supported by the National Science Foundation Graduate Research Fellowship under Grant No. 1122374, by the DARPA Fast Lightweight Autonomy (FLA) program, by the NASA Convergent Aeronautics Solutions project Design Environment for Novel
Vertical Lift
Vehicles (DELIVER), and by ARL DCIST under Cooperative Agreement Number W911NF-17-2-0181.

%%%%%%%%%%%%%%%%%%%%%%%%%%%%%%%%%%%%%%%%%%%%%%%%%%%%%%%%%%%%%%%%%%%%%%%%%%%%%%%%
\balance
\bibliographystyle{ieeetr}
%\bibliography{ref}
\newcommand{\noopsort}[1]{} \newcommand{\printfirst}[2]{#1}
  \newcommand{\singleletter}[1]{#1} \newcommand{\switchargs}[2]{#2#1}

%%%%%%%%%%%%%%%%%%%%%%%%%%%%%%%%%%%%%%%%%%%%%%%%%%%%%%%%%%%%%%%%%%%%%%%%%%%%%%%%

\end{document}